\newtheorem{theorem}{Theorem}
\newtheorem{lemma}{Lemma}
\newtheorem{definition}{Definition}
\theoremstyle{definition}
  \newtheorem{observation}{Observation}
\theoremstyle{definition}
  \newtheorem{example}{Example}
\newcommand{\alf}{\ensuremath{\mathcal{A}}}
\newcommand{\N}{\ensuremath{\mathbb{N}}}
\newcommand{\Z}{\ensuremath{\mathbb{Z}}}
\newcommand{\R}{\ensuremath{\mathbb{R}}}
\newcommand{\limn}{\ensuremath{\lim_{n\rightarrow\infty}}}      
\newcommand{\M}{\ensuremath{\mathcal{M}}}
\begin{document}

\title{Expressing the entropy of lattice systems as sums of conditional entropies.}

\author{Torbj{\o}rn Helvik \;and Kristian Lindgren\footnote{kristian.lindgren@chalmers.se}}

\affil{Complex systems group, Department of Energy and Environment, \\ Chalmers University of Technology, SE-41296 G\"oteborg, Sweden}

\date{\today}

\maketitle

\begin{abstract}
Whether a system is to be considered complex or not depends on how one searches for correlations. We propose a general scheme for calculation of entropies in lattice systems that has high flexibility in how correlations are successively taken into account. 
Compared to the traditional approach for estimating the entropy density, in which successive approximations builds on step-wise extensions of blocks of symbols, we show that one can take larger steps when collecting the statistics necessary to calculate the entropy density of the system. In one dimension this means that, instead of a single sweep over the system in which states are read sequentially, one take several sweeps with larger steps so that eventually the whole lattice is covered. This means that the information in correlations is captured in a different way, and in some situations this will lead to a considerably much faster convergence of the entropy density estimate as a function of the size of the configurations used in the estimate. The formalism is exemplified with both an example of a free energy minimisation scheme for the two-dimensional Ising model, and an example of increasingly complex spatial correlations generated by the time evolution of elementary cellular automaton rule 60.

 
\end{abstract}

\section{Introduction}

Many models in statistical mechanics involve a lattice of particles having spins or other states from a finite set, and with
 interaction between neighboring particles defined by a transition invariant potential. The Ising model, which was solved
by Onsager in 1944 \cite{onsager}, epitomizes such models. An important problem is to find the entropy density of the Gibbs
state corresponding to the interaction potential. 

This and other properties of lattice models are often investigated by Monte Carlo relaxation methods such as the Metropolis
algorithm. These methods yields estimates of thermodynamic quantities that can be directly measured in simulation runs, such
as internal energy $u$ and long range order. However, neither the entropy density $s$ nor the free energy $f = u-Ts$ can be obtained
directly, so special approaches are needed. Several methods have been devised to estimate the entropy of lattice systems using MC
simulations, the most important one being thermodynamic integration. See \cite{binder85} for a review. We will in this
paper be concerned with how the entropy density can be written in terms of sums of appropriate conditional entropies of spin
variables. The method is not confined to finding entropies of Gibbs states, but can be applied to any probability measure on
lattice systems in any dimension.

The first person which, to our knowledge, used conditional entropies to investigate two-dimensional lattice systems was Alexandrowicz
\cite{alexandrowicz71}.
His approach was to generate lattice configurations by adding spins one by one to an empty lattice using a Markov process. The
transition probabilities of the process was parameterized and depended on previous spins in some neighborhood. The best
parameter values was found for each inverse temperature $\beta$ by 
optimizing according to the minimum free energy principle. Entropy was then estimated as the average of $\log\frac{1}{p}$, with $p$
being the probability of the realized transition. This is tantamount to estimating the conditional entropy of a spin with
respect to the spins in some neighborhood.

In \cite{meirovitch77} Meirovitch introduced the idea that instead of searching for optimal transition probabilities, the transition
probabilities could be directly estimated from looking at frequencies in a lattice configuration obtained, e.g., from a
Monte Carlo algorithm. In \cite{schlijper89,schlijper90} Schlijper et.~al. combined this method of calculating entropy with the
Cluster Variation Method \cite{schlijper83} to obtain both a lower bound and an upper bound on the entropy. They also put the method
on formal ground, in particular using a result on the global Markov property for spin systems \cite{goldstein90}.

The method of using empirical frequencies for estimating the entropy has several advantages. It is cost effective and can
easily be included into a MC algorithm to monitor the entropy and free energy during a MC simulation. The method basically needs
only a single lattice configuration and is easily adaptable to more involved problems. Schlijper et.~al., and also Alexandrowicz,
pointed out that for a lattice configuration fluctuations in energy and entropy tend to cancel. As a consequence of this, free
energy, which often is the interesting quantity, is more easily determined than its separate contributions entropy and energy. The
effectivity of the method has been demonstrated by both Meirovitch and Schlijper et.~al, and it has been used in several applications,
e.g., \cite{Marcelja,Meirovitch83,Meirovitch83b,Meirovitch99,Olbrich00}.

An approach based on conditional entropies usually means that one searches for correlations, and the more information that is found in correlations the less is the estimate of the entropy. This search for correlations typically involves extensions of blocks in a regular way: In one dimension, one uses  entropies conditioned on an increasing sequence of lattice sites to the left. In two dimensions, one may extend a rectangular block in a similar way forming conditional entropies, or, alternatively one may extend blocks one lattice site at the time using a lexicographic ordering, as was exploited already by Kramers and Wannier \cite{kramers41}. How well this approach works for estimating the entropy depends on convergence properties of the conditional entropies, reflecting how long correlations that are present in the system and how they decline with distance. In a complex system, correlations may not be so easily detected, and it may turn out that the traditional approach, extending blocks by adding neighbouring lattice sites, is not efficient. Instead one may consider a search for correlations in which larger steps are taken, temporally disregarding states in lattice sites in-between, which is an approach we present here.

The main contribution of this paper is a general and flexible scheme for obtaining representations of the entropy density of a
lattice system in terms of suitable sums of conditional entropies. This is achieved by scanning the lattice in the order suggested by
some regular sublattice, and possibly multiple times in succession. Our results adds flexibility to the empirical
frequency approach to estimating entropy. We present the procedure for general point lattices in Section \ref{entropy_densities}. In
Section \ref{Gibbs_states}, we discuss how the procedure can be helpful in finding simple expressions when the measure is a Gibbs state with finite
interaction range, and this is exemplified by an entropy estimate applied to the two-dimensional Ising model. In Section \ref{sec:CA}, we show how the main theorem can be used to create a hierarchical decomposition of the entropy, and this is illustrated with entropy estimates of patterns generated by elementary cellular automaton rule 60.

The result on representations of entropy densities are also related to several information theoretic notions of structure and
complexity in lattice systems. This includes the effective measure complexity introduced by Grassberger \cite{grassberger86}, the excess entropy \cite{crutchfield03}, and local information introduced for one dimensional systems in \cite{helvik-et-al}. Whether a system is considered complex or not depends on how one searches for correlations -- a clever scheme for scanning the lattice may reveal structure that would not be so easily detected using a traditional approach.

\section{Entropy density in Point lattices}
\label{entropy_densities}

\subsection{Background}

By the term lattice we mean a regularly spaced array of points in $\R^d$. The correct mathematical term is \emph{point lattice}, as
the term lattice is a more general structure. Formally, a point lattice $V$ is a a discrete abelian subgroup of $\R^d$. Typical
point lattices in $\R^2$ are the quadratic grid and the hexagonal lattice. The triangular lattice is actually a union of two point
lattices which are translates of each other by a constant. We will use the term lattice in this paper to refer to the union of a
finite number translations of a single point lattice $V$. We will first present our result, Theorem \ref{teo:mainteo}, for a single
lattice. We then discuss how it can be extended to a general union of translates of a point lattice.

When drawing a lattice $V\subset \R^d$ it is convenient to draw the corresponding Voronoi cells instead. The Voronoi cell of $v\in
V$ is the subset $\{x\in\R^d : |x-v|<|x-u|\,\forall\, u\in V, v\neq u \}$. The collection of all Voronoi cells comprises a
periodic tiling of $\R^d$. Se Fig.~\ref{fig:2D_lattices} for illustrations.

\begin{figure}
\centering
\includegraphics[width=\textwidth]{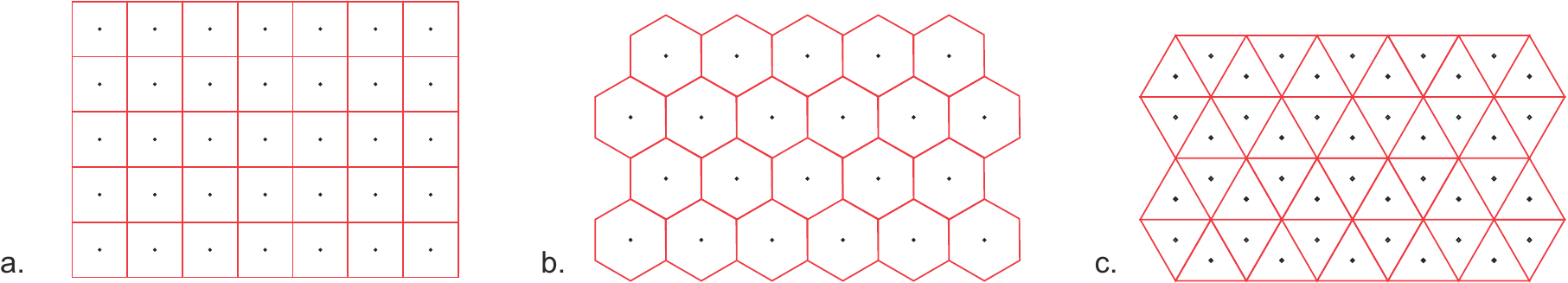}
\caption{Lattices in $\R^2$ with corresponding Voronoi cells.
\textbf{a.} Cubic, \textbf{b.} Hexagonal, \textbf{c.} Triangular.}
\label{fig:2D_lattices}
\end{figure}

We now consider spin systems on a point lattice $V$. By this we mean that each lattice point represents a particle which has some
spin from a finite set $\alf$. We first introduce some terminology. Let $V$ be a point lattice. For guiding the
intuition, it can be useful to think of $V$ as $\Z^d$ or just the  grid $\Z^2$. The point lattice is represented by an ordered
collection $e_1,\dots,e_d $ of linearly independent unit vectors in $\R^d$ such that
\[
V = \langle e_1,\dots,e_d\rangle = \left\{ \sum_{i=1}^d v_i e_i :
v_i\in\Z\right\}.
\]
The representation of a point $v\in V$ in terms of $(v_1,\dots,v_d)$ is unique. Note that $\underline{0}\in V$ for any point lattice $V$. For two subsets $\Lambda,\Lambda'\subset V$ of the point lattice, write
\[
\Lambda+\Lambda' = \{v+v' : v\in \Lambda, v'\in \Lambda'\} \,.
\]
Define the unit $d$-cube as $U_d = \{v\in V : \sup_{1\leq i \leq d} |v_i|\leq 1 \}$, and define the boundary of a set
$\Lambda\subset V$ as
\[
\partial \Lambda = \Lambda \cap (U_d + \Lambda^C) \;.
\]
We say that $\Lambda_n \uparrow V$ in the van Hove sense if
\begin{enumerate}
    \item $\bigcup_{n} \Lambda_n = V$,
    \item $\Lambda_{n}\subset\Lambda_{n+1}$ $\forall n$,
    \item $\limn |\partial \Lambda_n|/|\Lambda_n|= 0 $.
\end{enumerate}

An $m$-dimensional subgroup $G = \langle \hat{e}_1,\dots,\hat{e}_m \rangle$ of the lattice is the collection $\{\sum_i g_i\hat{e}_i
\}$ of all linear combinations of a set of $m$ linearly independent vectors $\hat{e}_1,\dots,\hat{e}_m\in V$. The subgroup
is itself a point lattice. We define a tiling\footnote{This is a non-standard use of the term tiling.} of $V$ as follows.
\begin{definition}
A \emph{tiling} of a $d$-dimensional point lattice $V$ is a pair $(A,G)$
consisting of a finite subset $A\subset V$ with $\underline{0}\in
A$ and a $d$-dimensional subgroup $G$ of $V$ satisfying
\begin{enumerate}
    \item $A+G = V$.
    \item $(A+g) \bigcap (A+h) = \emptyset$ for all $g,h\in G$, $g\neq
    h$.
\end{enumerate}
\end{definition}

We will use the \emph{lexicographical} order on $G$. For $g,g'\in G$, write $g = \sum_i g_i\hat{e}_i$ and $g' = \sum_i
g_i'\hat{e}_i$. 
We say that $g<g'$ if there is an $i$, $1\leq i \leq k$, such that $g_i < g'_i$ and  $g_j = g'_j$ for $j<i$. 
Note that due to the use of the lexicographical order, we do not consider, e.g., the tiling of $V=\Z^2$ with $A=\{(0,0)\}$ and
$G=\langle (1,0),(0,1)\rangle$ and the tiling with $A=\{(0,0)\}$ and $G=\langle (1,0),(1,1)\rangle$ as equal. 

Based on the lexicographical order, we define the subset
$G^- \subset G$ as
\[
G^- = \{g\in G: g<0\} .
\]

An example of a tiling of the $2$-dimensional square lattice is illustrated in Fig.~(\ref{fig:2D_tiling}). 

\begin{figure}
\centering
\includegraphics[width=0.3\textwidth]{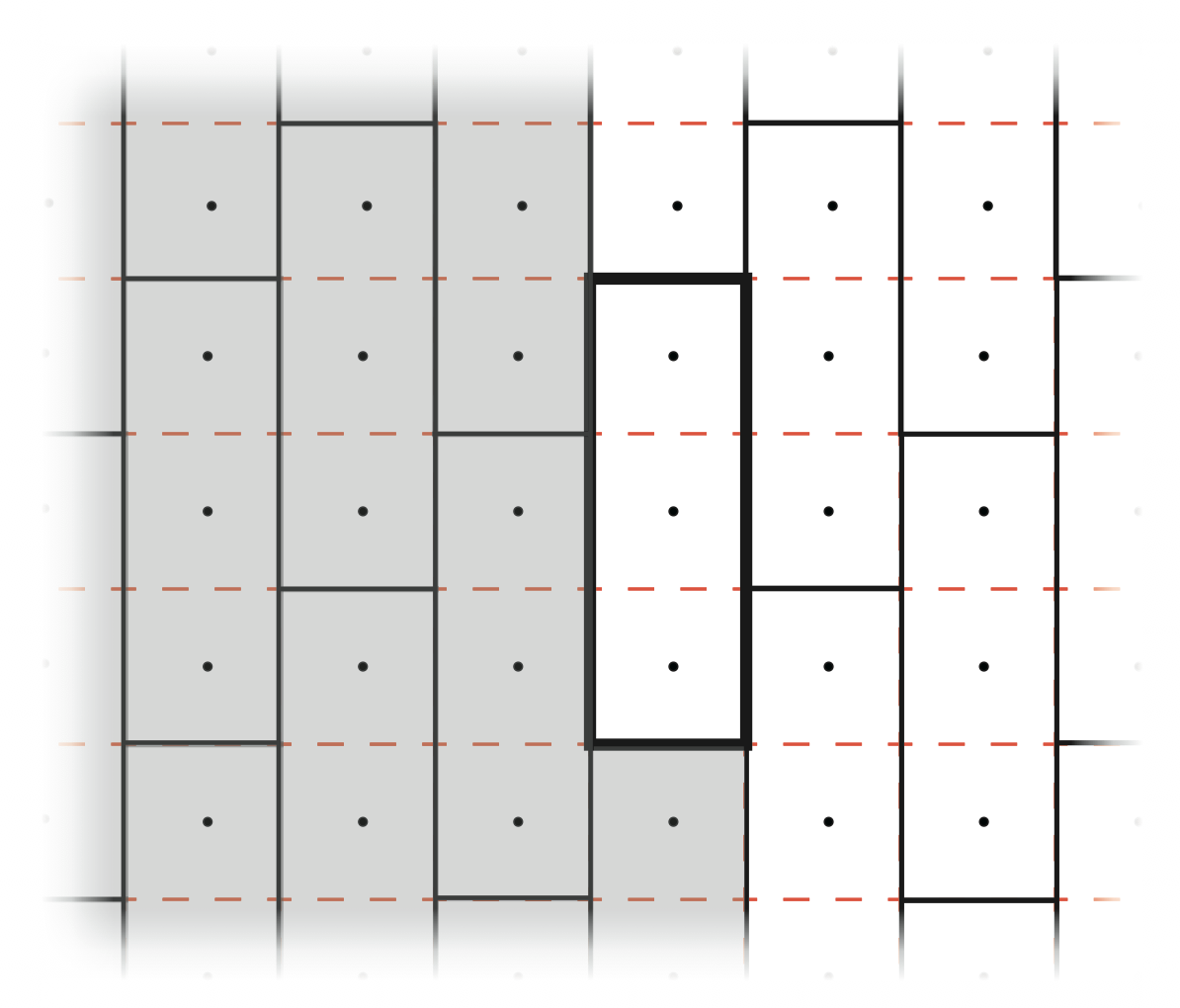}
\caption{Example of a tiling of the square lattice. The tiling is defined by $(A,G)$ with $A=\{(0,0),(0,1),(0,2)\}$ and $G=\langle (1,1),(1,-2) \rangle$. The set $A$ is marked with a thick line and the sub lattice $G^-$ is shaded.}
\label{fig:2D_tiling}
\end{figure}

\subsection{Measures and entropy}

Let $\alf$ be a finite set consisting of the possible spins and let $\Lambda\subseteq V$. An assignment of a spin in $\alf$ to
each element of $\Lambda$ is called a configuration of $\Lambda$. We denote a configuration of $\Lambda$ by $x_{\Lambda}$. The set of all configurations of $\Lambda$ is denoted by $\alf^{\Lambda}$. A configuration $x_{V}$ on the entire space will be denoted just
by $x$.

Let the set $\M$ consist of all translation invariant probability measures on $\alf^{V}$. These are often called states in
statistical mechanics. The restriction of a measure $\mu\in\M$ to $\alf^{\Lambda}$ is denoted by $\mu_{\Lambda}$. We drop the
subscript when no confusion can occur. A translation invariant measure is uniquely defined by specifying its restriction to all
finite $\Lambda\in V$.

The entropy of a subset $\Lambda$ with respect to a measure $\mu$ is defined as
\begin{equation}\label{eq:finiteentropy}
    S(\Lambda) = -\sum_{x_{\Lambda}\in\alf^{\Lambda}}
    \mu_{\Lambda}(x_{\Lambda})\log \mu_{\Lambda}(x_{\Lambda}) \,.
\end{equation}
The entropy is the average information that is gained by observing the configuration on $\Lambda$, where information is used in the
sense of Shannon \cite{Shannon48}.

The entropy density of a measure $\mu$ is defined as the average entropy per spin. That is, as the limit
\begin{equation}
s_\mu = \limn \frac{1}{|\Lambda_n|}S(\Lambda_n),
\end{equation}
where $\Lambda_n \uparrow V$ in the van Hove sense.

It is well known that $s$ also can be written as a \emph{conditional entropy}. Let $\Lambda$ and $\Lambda'$ be finite subsets of $V$. The conditional entropy of $\Lambda$ given $\Lambda'$ with respect to $\mu$ is defined as
\begin{equation}\label{eq:finiteentropy}
    S(\Lambda|\Lambda') = -\sum_{x_{\Lambda\cup \Lambda'}}
    \mu(x_{\Lambda\cup \Lambda'})\log
    \mu(x_{\Lambda}|x_{\Lambda'}),
\end{equation}
where $\mu(x_{\Lambda}|x_{\Lambda'}) =
\mu(x_{\Lambda\cup\Lambda'})/\mu(x_{\Lambda})$. This is the information gained from observing the configuration on $\Lambda$
when the configuration on $\Lambda'$ is known.

For $\Lambda'$ infinite, the conditional entropy is defined by a limit. Define $V_n = \{ \sum_i v_ie_i : |v_i|\leq n\,\forall\,
i\}$, and
\begin{equation}\label{eq:infinitecondentropy}
    S(\Lambda|\Lambda') =\limn S(\Lambda|\Lambda'\cap V_n) \;.
\end{equation}
Convergence is ensured by the monotonicity property of conditional entropy:
\begin{lemma}\label{lem:G_nkonv}
\[
\Lambda' \subseteq \Lambda''\;\Rightarrow\; S(\Lambda|\Lambda'')
\leq S(\Lambda|\Lambda') \;.
\]
\end{lemma}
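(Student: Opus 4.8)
The plan is to prove the statement first for finite $\Lambda'\subseteq\Lambda''$ and then recover the general case from the limit definition~\eqref{eq:infinitecondentropy}. The content is the familiar information-theoretic principle that conditioning on more data cannot increase uncertainty. Writing $\Delta=\Lambda''\setminus\Lambda'$, so that $\Lambda''=\Lambda'\cup\Delta$ is a disjoint union, I want to show $S(\Lambda|\Lambda'\cup\Delta)\le S(\Lambda|\Lambda')$; the quantity I will identify the gap with is the conditional mutual information of $\Lambda$ and $\Delta$ given $\Lambda'$, which is manifestly non-negative.

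First I would rewrite both conditional entropies as sums over configurations of $\Lambda\cup\Lambda''$. Since $\log\mu(x_\Lambda|x_{\Lambda'})$ depends only on $x_{\Lambda\cup\Lambda'}$, marginalising over $x_\Delta$ shows
\begin{equation*}
S(\Lambda|\Lambda') = -\sum_{x_{\Lambda\cup\Lambda''}} \mu(x_{\Lambda\cup\Lambda''})\,\log\mu(x_\Lambda|x_{\Lambda'}),
\end{equation*}
while $S(\Lambda|\Lambda'')$ already has this form with $x_{\Lambda''}$ in place of $x_{\Lambda'}$. Subtracting, the difference collapses to
\begin{equation*}
S(\Lambda|\Lambda') - S(\Lambda|\Lambda'') = \sum_{x_{\Lambda\cup\Lambda''}} \mu(x_{\Lambda\cup\Lambda''})\,\log\frac{\mu(x_\Lambda|x_{\Lambda'},x_\Delta)}{\mu(x_\Lambda|x_{\Lambda'})},
\end{equation*}
and after substituting $\mu(x_\Lambda|x_{\Lambda'},x_\Delta)=\mu(x_\Lambda,x_\Delta|x_{\Lambda'})/\mu(x_\Delta|x_{\Lambda'})$ this becomes, for each fixed $x_{\Lambda'}$, a relative entropy between the joint law of $(x_\Lambda,x_\Delta)$ and the product of its marginals, all conditioned on $x_{\Lambda'}$.

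To finish I would invoke Gibbs' inequality: for fixed $x_{\Lambda'}$ with $\mu(x_{\Lambda'})>0$, the inner sum $\sum \mu(x_\Lambda,x_\Delta|x_{\Lambda'})\log\frac{\mu(x_\Lambda,x_\Delta|x_{\Lambda'})}{\mu(x_\Lambda|x_{\Lambda'})\mu(x_\Delta|x_{\Lambda'})}\ge 0$ follows from the convexity of $t\mapsto-\log t$ and Jensen's inequality (equivalently from $\log t\le t-1$), using the convention $0\log 0=0$ to absorb vanishing probabilities. Averaging these non-negative terms with the non-negative weights $\mu(x_{\Lambda'})$ gives $S(\Lambda|\Lambda')-S(\Lambda|\Lambda'')\ge 0$, as claimed. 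The general (infinite) case then follows because both sides are defined through the monotone limit~\eqref{eq:infinitecondentropy} over the increasing boxes $V_n$, and the finite inequality passes to the limit termwise.

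The only delicate point — and the step I expect to require the most care — is the bookkeeping around configurations of probability zero, where the conditional probabilities are undefined: one must check that the $0\log 0=0$ convention makes every expression above well defined and that no positive-probability configuration is accidentally dropped when the sums are regrouped by the value of $x_{\Lambda'}$. Once that is handled, the inequality is a direct application of Jensen's inequality and carries no further subtlety.
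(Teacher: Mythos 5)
Your proposal is correct and follows essentially the same route as the paper: both express the difference $S(\Lambda|\Lambda')-S(\Lambda|\Lambda'')$ as a single sum over configurations of $\Lambda\cup\Lambda''$ and identify it as a nonnegative average of Kullback--Leibler divergences (you group by $x_{\Lambda'}$ and read off a conditional mutual information; the paper groups by $x_{\Lambda''\setminus\Lambda}$ and reads off a divergence between the two predictive distributions --- the same quantity bracketed differently), with nonnegativity from Gibbs'/Jensen's inequality. Your explicit treatment of the infinite case via the limit definition and of the $0\log 0$ convention is a sound addition that the paper leaves implicit.
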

\begin{proof}
When $\Lambda' \subseteq \Lambda''$ we have
\begin{align}
S(\Lambda|\Lambda') - S(\Lambda|\Lambda'') &= \sum_{x_{\Lambda\cup \Lambda''}}
    \mu(x_{\Lambda\cup \Lambda''})\log \frac{\mu(x_{\Lambda}|x_{\Lambda''})}{\mu(x_{\Lambda}|x_{\Lambda'})} = \nonumber \\
    &= \sum_{x_{\Lambda'' \setminus \Lambda}} \mu(x_{\Lambda'' \setminus \Lambda}) \sum_{x_{\Lambda}} 
    \mu(x_{\Lambda}|x_{\Lambda''}) \log \frac{\mu(x_{\Lambda}|x_{\Lambda''})}{\mu(x_{\Lambda}|x_{\Lambda'})} \ge 0 \,, \nonumber
\end{align}
where the inequality follows from the fact that the second sum is the Kullback-Leibler divergence, or the relative entropy, between the distributions $\mu(x_{\Lambda}|x_{\Lambda'})$ and $\mu(x_{\Lambda}|x_{\Lambda''})$, which is a non-negative quantity \cite{kullback51}. This concludes the proof. 
\end{proof}
\noindent
A further consequence of this result is that conditional entropy is bounded above by $\log |\alf|$ per spin.
\begin{equation}
S(\Lambda|\Lambda') \leq S(\Lambda) \leq |\Lambda|\log |\alf| \;.
\end{equation}
For a one-dimensional lattice it is easy to prove that $s_\mu = S(\{0\}|\{i:i<0\})$ \cite{coverthomas}. It is practical to
represent the expression graphically in the following way
\begin{equation}
s_\mu \hspace{2mm}= \hspace{2mm}
    \includegraphics[width=3.25cm]{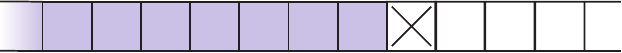}
\end{equation}
The right hand side is to be interpreted as the conditional entropy of the spin at the cross conditioned on the spins all filled cells
(which here are all spins to the left). This representation of the expression will prove useful later, when the situation is more
involved.

Before we proceed to the main result we state a simple but very useful property of conditional entropies that follows from
\eqref{eq:finiteentropy}.
\begin{observation}
For finite $\Lambda, \Lambda', \Lambda'' \subset V$,
\begin{equation}\label{eq:finitecondentropy}
    S(\Lambda\cup\Lambda'|\Lambda'') = S(\Lambda|\Lambda'\cup
    \Lambda'') + S(\Lambda'|\Lambda'') \;.
\end{equation}
\end{observation}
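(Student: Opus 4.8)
The plan is to reduce the identity to the elementary chain rule for conditional probabilities and then integrate this pointwise relation against the joint measure. Using $\mu(x_{\Lambda}|x_{\Lambda'}) = \mu(x_{\Lambda\cup\Lambda'})/\mu(x_{\Lambda'})$ from the definition of conditional entropy, I would first record the factorization
\begin{equation}
\mu(x_{\Lambda\cup\Lambda'}|x_{\Lambda''}) = \mu(x_{\Lambda}|x_{\Lambda'\cup\Lambda''})\,\mu(x_{\Lambda'}|x_{\Lambda''}),
\end{equation}
which holds for every configuration on $\Lambda\cup\Lambda'\cup\Lambda''$. This is immediate by cancellation: the product on the right equals $\mu(x_{\Lambda\cup\Lambda'\cup\Lambda''})/\mu(x_{\Lambda''})$, and this is precisely $\mu(x_{\Lambda\cup\Lambda'}|x_{\Lambda''})$, since a conditional probability depends only on the union of the conditioned and conditioning sites.

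Next I would take logarithms in this factorization, multiply through by $-\mu(x_W)$ with $W=\Lambda\cup\Lambda'\cup\Lambda''$, and sum over all $x_W\in\alf^W$. The left-hand side is then $S(\Lambda\cup\Lambda'|\Lambda'')$ by definition, and the first term on the right is exactly $S(\Lambda|\Lambda'\cup\Lambda'')$, because in both cases the configuration index already ranges over all of $x_W$. The one step requiring a little care is the remaining term: the factor $\log\mu(x_{\Lambda'}|x_{\Lambda''})$ depends only on the coordinates in $\Lambda'\cup\Lambda''$, so I would perform the summation over the other coordinates first, marginalizing $\mu(x_W)$ down to $\mu(x_{\Lambda'\cup\Lambda''})$ and recovering $S(\Lambda'|\Lambda'')$. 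Summing the three contributions gives the asserted equality.

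I do not expect a serious obstacle, since this is just the set-theoretic chain rule for entropy; the only point worth verifying is that nothing forces $\Lambda,\Lambda',\Lambda''$ to be disjoint. As the definitions and the factorization above are written purely in terms of set unions, the cancellation step is unaffected by overlaps, so no disjointness hypothesis is needed (for example, if $\Lambda\subseteq\Lambda'\cup\Lambda''$ the first term vanishes, consistent with the identity). The only real work is thus the marginalization bookkeeping in the last term, which is routine.
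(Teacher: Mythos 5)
Your proof is correct and is exactly the routine verification the paper leaves implicit: the Observation is stated without proof as following directly from the definition of conditional entropy, and your factor-cancel-take-logs-marginalize argument is that argument, including the correct handling of overlapping sets. Note that you also (rightly) use $\mu(x_{\Lambda}|x_{\Lambda'}) = \mu(x_{\Lambda\cup\Lambda'})/\mu(x_{\Lambda'})$, silently fixing the paper's typo, which writes $\mu(x_{\Lambda})$ in the denominator.
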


\noindent
Also note that for a translation invariant measure $\mu$ on $\alf^V$,
\[
S(\Lambda|\Lambda') = S(\Lambda+v|\Lambda'+v) \quad \forall \;
v\in V \;.
\]

\subsection{Main result}

The main result of this paper shows that there is great flexibility in choosing how to express the entropy density in
terms of conditional entropies. The various expressions are achieved through using different tilings of $V$ and coverings of
the basic tile $A$. The method is applicable in all dimensions $d$.

%

\begin{theorem}
\label{teo:mainteo} %
Let $V$ be a point lattice in $\R^d$ and let $\mu$ be a translation invariant measure on $\alf^V$. Let $(A,G)$ be a tiling
of $V$. Partition $A$ into $N$ nonempty sets $A_1,\dots, A_N$, with $1\leq N\leq |A|$. Define $A_{<k} = \bigcup_{i=1}^{k-1} A_i$.
Then
\begin{equation}\label{eq:latticeteo}
    s_\mu =  \frac{1}{|A|}\sum_{k=1}^N  S\left(A_k\Big|\left(A_k+G^-\right)
     \bigcup \left( A_{<k} + G \right)\right) \;. 
\end{equation}
\end{theorem}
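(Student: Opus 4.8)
The plan is to build a van Hove exhaustion adapted to the tiling, apply the chain rule for entropy to peel off the pieces $A_k+g$ in the scanning order that is encoded by the conditioning sets, and then pass to the limit using the monotonicity of conditional entropy (Lemma \ref{lem:G_nkonv}) together with the van Hove property to average the resulting conditional entropies. The key observation guiding the decomposition is that the sets in \eqref{eq:latticeteo} correspond to \emph{sweep-by-sweep} scanning: the outer loop runs over the parts $k=1,\dots,N$, and the inner loop runs over the translates $g\in G$ in lexicographic order.

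First I would choose the exhausting sequence $\Lambda_n = A + G_n$, where $G_n \uparrow G$ is a van Hove sequence inside the sublattice $G$ (for instance the $G$-points of a growing symmetric box), so that $\Lambda_n\uparrow V$ in the van Hove sense. Since $A$ is finite and the translates $A+g$ are disjoint by the tiling property, $|\Lambda_n| = |A|\,|G_n|$, and hence $s_\mu = \limn \frac{1}{|A|\,|G_n|} S(\Lambda_n)$. I would then order the $N|G_n|$ pieces $A_k+g$ ($1\le k\le N$, $g\in G_n$) first by the sweep index $k$ and then by $g$ in the lexicographic order on $G$. Iterating the chain rule of the Observation, eq.\ \eqref{eq:finitecondentropy}, with empty outer condition decomposes $S(\Lambda_n)$ into the sum, over all pieces, of the conditional entropy of each piece given all earlier pieces. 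The pieces preceding $A_k+g$ within $\Lambda_n$ are precisely $(A_{<k}+G_n)\cup(A_k+\{g'\in G_n:g'<g\})$, so
\[
S(\Lambda_n)=\sum_{k=1}^N\sum_{g\in G_n} S\!\left(A_k+g \,\Big|\, (A_{<k}+G_n)\cup(A_k+\{g'\in G_n:g'<g\})\right).
\]

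Next I would recenter each term at the origin by translation invariance. Writing $C_{k,g,n}$ for the conditioning set above, $S(A_k+g\mid C_{k,g,n}) = S(A_k \mid C_{k,g,n}-g)$, and since $G_n-g\subseteq G$ and $(G_n-g)\cap G^- \subseteq G^-$, the shifted set $C_{k,g,n}-g=(A_{<k}+(G_n-g))\cup(A_k+((G_n-g)\cap G^-))$ is a finite subset of the infinite set $\tilde C_k := (A_k+G^-)\cup(A_{<k}+G)$ appearing in the theorem. Lemma \ref{lem:G_nkonv} then gives the lower bound $S(A_k\mid C_{k,g,n}-g)\ge S(A_k\mid \tilde C_k)$ for every $g$. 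For a matching upper bound I would use that $S(A_k\mid \tilde C_k)=\inf\{S(A_k\mid F): F\subseteq \tilde C_k \text{ finite}\}$, which follows from the defining limit \eqref{eq:infinitecondentropy} and monotonicity: given $\vareps>0$, fix a finite $F\subseteq\tilde C_k$ with $S(A_k\mid F)\le S(A_k\mid\tilde C_k)+\vareps$; whenever $g$ lies deep enough inside $G_n$ that $G_n-g$ contains the finitely many translates needed to cover $F$, we have $F\subseteq C_{k,g,n}-g$ and hence $S(A_k\mid C_{k,g,n}-g)\le S(A_k\mid\tilde C_k)+\vareps$.

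Finally I would carry out the Cesàro average. For each fixed $k$ every term is bounded by $|A_k|\log|\alf|$, and the set of ``boundary'' $g\in G_n$ that are not deep enough has cardinality $o(|G_n|)$ by the van Hove property of $G_n$; splitting the sum over $g$ into interior and boundary parts therefore gives $\frac{1}{|G_n|}\sum_{g\in G_n} S(A_k\mid C_{k,g,n}-g)\to S(A_k\mid\tilde C_k)$ as $n\to\infty$. Summing over $k$ and dividing by $|A|$ yields \eqref{eq:latticeteo}. The chain-rule decomposition is routine bookkeeping; the step that needs care, and which I expect to be the main obstacle, is this last limit — one must show uniformly that the interior conditional entropies are squeezed into $[\,S(A_k\mid\tilde C_k),\,S(A_k\mid\tilde C_k)+\vareps\,]$ while the boundary terms, bounded only crudely, remain asymptotically negligible. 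This is exactly where the monotonicity of conditional entropy and the van Hove condition combine.
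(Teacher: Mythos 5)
Your proposal is correct and follows essentially the same route as the paper's own proof: the exhaustion $\Lambda_n = A+G_n$, the two-stage chain rule ordered first by the sweep index $k$ and then lexicographically in $g$, recentering by translation invariance, the lower bound from Lemma \ref{lem:G_nkonv}, and the interior/boundary split in which the non-interior $g$ contribute $o(|G_n|)$ terms bounded crudely by $|A_k|\log|\alf|$. The only cosmetic differences are that you phrase the upper bound via the infimum-over-finite-subsets characterization of the infinite conditional entropy with an $\varepsilon$, whereas the paper fixes a box $G_N$, bounds interior terms by $S\bigl(A_k\bigm|(A_k+G_N^-)\cup(A_{<k}+G_N)\bigr)$ and lets $N\to\infty$ at the end, and that you leave implicit the short verification (spelled out in the paper) that $A+G_n$ is indeed van Hove in $V$.
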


\begin{proof}
Write $G = \langle \hat{e}_1,\dots,\hat{e}_d \rangle$. Define the subset $G_n$ of $G$ as
\begin{equation}\label{eq:GnDef}
G_n = \left\{ \sum_{i=1}^d g_i\hat{e}_i : |g_i| \leq n \;\forall\;
i \right\} \;.
\end{equation}
Let $\Lambda_n = A + G_n$. We claim that $\Lambda_n\uparrow \Z^d$ in the van Hove sense. The only non-trivial point to show is that
$\limn |\partial \Lambda_n|/|\Lambda_n|= 0 $. We start with a simple observation.
\begin{observation}\label{obs:G_nkonv}
For $G_n$ defined in \eqref{eq:GnDef}, and any $N\in\N$,
\[
\limn \frac{|G_{n+N}|}{|G_n|} = 1 \;.
\]
\end{observation}
\noindent
This follows since $|G_n| = (2n+1)^d$. Let $N$ be the smallest integer such that $U_d\subseteq \Lambda_N$. Such an integer must
exists since $(A,G)$ is a tiling of $V$. Note that by the definition of $G_n$ we have
\[
\Lambda_{Nm} = \Lambda_{N(m-1)}+\Lambda_{N} \;.
\]
In particular, this means that $\Lambda_{N(m-1)} + U_d \subseteq \Lambda_{Nm}$, so no element of $\Lambda_{N(m-1)}$ is in the
boundary of $\Lambda_{Nm}$. As a consequence, we have for any $n$ satisfying $N(m-1) < n \leq Nm$:
\begin{align*}
|\partial \Lambda_{n}| &\leq |\Lambda_{Nm}|-|\Lambda_{N(m-2)}| \,,
\end{align*}
while
\[
|\Lambda_{n}| > |\Lambda_{N(m-1)}| \;.
\]
Consequently, by Observation \ref{obs:G_nkonv}, $\limn |\partial \Lambda_n|/|\Lambda_n|  = 0$.

Now consider $S(\Lambda_n)$.
By using \eqref{eq:finitecondentropy} recursively we can write $S(\Lambda_n)$ in the form
\[
S(\Lambda_n) = \sum_{k=1}^N S\left(A_k + G_n \big| A_{<k} +
G_n\right).
\]
Consider an arbitrary term of the sum. Let $G_n$ inherit the lexicographic order from $G$. By using
\eqref{eq:finitecondentropy} again we obtain
\begin{equation}\label{eq:splitt2}
\begin{split}
S(A_k &  + G_n \big| A_{<k} + G_n) =  \\  &= \sum_{g\in G_n}
S\left(A_k+g
\big| \left(A_k + \{h\in G_n : h<g\} \right) \bigcup\left(A_{<k} + G_n\right) \right) = \\
&= \sum_{g\in G_n} S\left(A_k \big| \left(A_k + \{h\in (G_n-g) :
h<0\} \right) \bigcup\left(A_{<k} + G_n-g \right) \right) \,.
\end{split}
\end{equation}
Here we have used the translation invariance of the measure. By monotonicity of the entropy, Lemma \ref{lem:G_nkonv}, no term in
the sum can be smaller than $S(A_k|(A_k+G^-)\bigcup (A_{\leq k} + G))$. We can conclude that
\begin{equation}\label{S_larger_than}
S(A_k+G_n | A_{<k}+G_n ) \geq |G_n|\cdot
S\left(A_k|(A_k+G^-)\bigcup (A_{\leq k}+G)\right) \,.
\end{equation}
To show the opposite inequality, fix an $N\in\N$. For $n$ large, most of the terms in the sum on the right hand side of
\eqref{eq:splitt2} must be smaller than $S(A_k|(A_k+G_N^-)\cup (A_{<k}+G_N))$. Formally, if $g\in G_{n-N}$, then $G_N \subseteq
(G_n - g)$. Therefore
\begin{equation}
\begin{split}
\label{S_smaller_than} S(G_n+  A_k | G_n + A_{<k} ) \leq
&(|G_n|-|G_{n-N}|)|A_k|\log |\alf| \\ &+
|G_{n-N}|S(A_k|(A_k+G_N^-)\cup\left(A_{<k}+G_N\right) ).
\end{split}
\end{equation}
Since \eqref{S_smaller_than} is valid for all $N\geq 1$, Observation \ref{obs:G_nkonv} yields
\begin{align}
\limn \frac{1}{|\Lambda_n|} S(\Lambda_n) &=
\frac{1}{|A|}\sum_{k=1}^N \limn  \frac{1}{|G_n|} S\left(A_k+G_n
\big| A_{<k} + G_n \right) \nonumber \\ 
&\leq \frac{1}{|A|}\sum_{k=1}^N S(A_k|(A_k+G^-)\cap (A_{\leq k} +
G)) \,.
\end{align}
However, from \eqref{S_larger_than} we obtain the same inequality with $\geq$. The result follows.

\end{proof}

\noindent
Note that, the first term in \eqref{eq:latticeteo} is an entropy term that can be decomposed using the Theorem again, eventually resulting in a hierarchical decomposition of the lattice with corresponding entropy contributions. This is illustrated in the application to symbol sequences generated by cellular automata in Section 4.

\subsection{Examples}

\begin{example}
The well-known way of writing the entropy density of a spin system on $\Z^d$ as a conditional entropy relies on building up the
configuration layer by layer in succeeding dimensions. This corresponds to choosing $A = \{\underline{0}\}$, $A_1=A$ and
$G=V$, i.e., letting $G$ and $V$ have the same basis vectors.
\end{example}

\begin{example}
Take $V=\Z$. By putting $A=\{0,1\}$, $G=\langle 2\rangle$, $A_1=\{0\}$ and $A_2=\{1\}$, we first condition on every other spin to the left of position $0$,
and then on the entire sequence to left of position $1$ and every other spin to the right (starting with position $2$). Graphically, the expression is
\begin{equation}
s_\mu \hspace{2mm}= \hspace{2mm}
    \frac12\,\cdot\,
    \includegraphics[width=3.25cm]{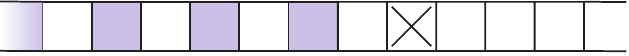} \;+\;
    \frac12\,\cdot\,
    \includegraphics[width=4cm]{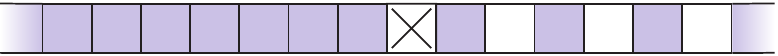}
\end{equation}
or, more formally, 
\begin{equation}
s_\mu=\frac12 \, \cdot \, S\big(\{0\} \big| \{0\}+G^-\big) + \frac12 \, \cdot \, S\big(\{1\} \big| (\{1\}+G^-)\cup(\{0\}+G)\big) \,.
\end{equation}
If $\mu$ is a Markov measure with memory one, this expression directly provides the relation $S(\{0\}|\{-1,1\}) =
2S(\{0\}|\{-1\}) - S(\{0\}|\{-2\})$.
\end{example}

\begin{example}
An analogue to the previous example for $V=\Z^2$ is to condition on a checker board pattern. This is obtained, e.g., by putting
$A=\{(0,0),(1,0)\}$, $A_1 = \{(0,0)\}$, $A_2 = \{(1,0)\}$ and $G = \langle(1,1),(1,-1)\rangle$. See Fig.~\ref{fig:2D_eks1} for an illustration.
\end{example}
\begin{figure}
\centering
\includegraphics[width=0.6\textwidth]{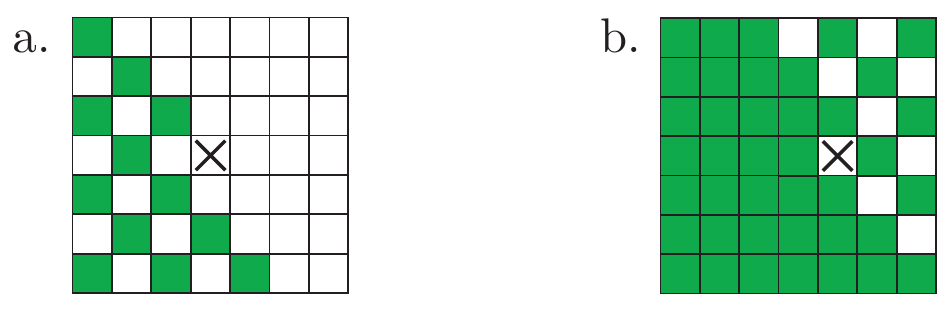}
\caption{The two components of the expression for $s_\mu$ in
$\Z^2$ found by using $A=\{(0,0),(1,0)\}$, $A_1 = \{(0,0)\}$, $A_2
= \{(1,0)\}$ and $G = \langle(1,1),(1,-1)\rangle$. The lattices stretch off to
infinity. \textbf{a.} $S(A_1|A_1+G^-)$. \textbf{b.}
$S(A_2|(A_2+G^-)\cup (A_1+G))$. }
\label{fig:2D_eks1}
\end{figure}

\begin{example}\label{ex:hex}
When $|A_k|>1$ the conditional entropy $S(A_k|B)$, for some $B\subseteq V$, can be written as a sum of $|A_k|$ conditional
entropies. We illustrate this using the hexagonal lattice $V = \langle(1,0),(\frac12,\frac{\sqrt {3}}{2})\rangle$. Put $A =
\{(-1,0),(0,0),(1,0)\}$ and $A_1 = A$. Let $G = \langle(3,0),(\frac{3}{2},\frac{\sqrt{3}}{2},)\rangle$. Then
\begin{align}
3s_\mu &= S(A|A+G^-) = \nonumber \\
&= S(\{(0,0)\}|A+G^-) +  S(\{(-1,0)\}|\{(0,0)\} \cup(A+G^-)) \nonumber  \\
& \;\; + S(\{(1,0)\}|\{(-1,0),(0,0)\} \cup(A+G^-)) \,.
\end{align}

\end{example}
\begin{figure}
\centering
\includegraphics[width=0.7\textwidth]{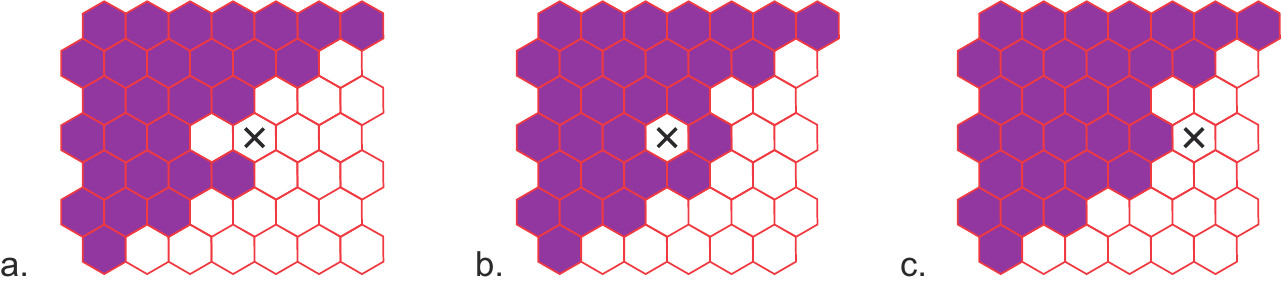}
\caption{The three components of the expression for $s_\mu$ of the hexagonal lattice constructed in example \ref{ex:hex}. Here, $s_\mu$ is
$\frac13$ times the sum of the displayed conditional entropies. }
\label{2D_hexagonal_eks}
\end{figure}

\subsection{Collections of translated point lattices}

Now we look at the case where we have a union of translations of a single point lattice, such as the triangular lattice in $\R^2$.

The lattice $V$ is then represented by $e_1,\dots,e_d$ and a finite set $T$ of translations $T=\{0,\tau_1,\dots, \tau_k\}$,
with $\tau_i\in\R^d-\langle e_1,\dots,e_d\rangle$. Note that $\underline{0}\in T$, and that no $\tau_i$ is in the point lattice $\sum v_i e_i$
defined by the $e_i$. Each $v\in V$ can be written as
\begin{equation}
v = \tau_v + \sum_{i=1}^d v_i e_i, \text{ where } \tau_v \in T,\,
v_i\in\Z \,.
\end{equation}
The representation in terms of $(v_1,\dots,v_d,\tau_v)$ is unique.

This union of point lattices can then be used in the same way as is expressed in Theorem \ref{teo:mainteo}, which implies that the tiling $(A,G)$ needs to be based on a finite subset $A$ that contains elements from all point lattices in the union forming $V$. Alternatively, one can derive the entropy separately for the different point lattices, each conditioned on the previously considered lattices. Both of these approaches are illustrated for a triangular lattice in the next section on Gibbs states. The triangular lattice is the union of two translations of the hexagonal lattice, each represented by $\langle(1,0),(\frac12,\frac{\sqrt {3}}{2})\rangle$, and with a translation set $T=\{(0,0),(\frac12,\frac{1}{2\sqrt{3}})\}$. 


\section{Applications to Gibbs states}
\label{Gibbs_states}

If the measure $\mu$ is a Gibbs measure corresponding to a potential $\Phi_X$ of finite interaction range $r$, then it has
the local Markov property. In particular, if only the nearest neighbours interact,  this is expressed as
\begin{equation}
\mu(x_{\Lambda}|x_{\Lambda^C}) =
\mu(x_{\Lambda}|x_{\partial(\Lambda^C)}) \,,
\end{equation}
for each finite $\Lambda$. I.e. when the spins at the outer boundary of $\Lambda$ is known, the spins located in the rest of
the lattice yield no further information about the spins in $\Lambda$. This property might not be valid for infinite
$\Lambda$. If it is, we say that the system has the global Markov property. This property can simplify the conditional entropy
expressions for the entropy density, since we can ignore all spins that are in the ''shadow'' of some specified spins, see, e.g., Goldstein et al \cite{goldstein90}.

For the standard tiling of the triangular lattice shown in Fig.~\ref{fig:2D_lattices}, the local Markov property would result in an entropy decomposition schematically shown in Fig.~\ref{fig:triang_lattice}. Here we choose the tiling $(A,G)$ with $A=\{(0,0),(\frac12,\frac{1}{2\sqrt{3}})\}$, picking one point from each of the two point lattices that form the triangular lattice. For $G$ we can then use the vectors of the hexagonal lattice, $\langle(1,0),(\frac12,\frac{\sqrt {3}}{2})\rangle$.

One way to derive the entropy would be to use the whole set of $A$ in the entropy density expression of Theorem \ref{teo:mainteo}, i.e.,
$s_\mu = \frac12 S(A | A + G^-)$. We can then split this conditional entropy into two terms, so that 
\begin{equation}
s_\mu = \frac12 S(A_1 | A + G^-) + \frac12 S(A_2 | (A + G^-) \cup A_1) \;, 
\end{equation}
where $A_1=\{(0,0)\}$ and $A_2=\{(\frac12,\frac{\sqrt {3}}{2})\}$.
This expression of the entropy density is shown in the first row of Fig.~\ref{fig:triang_lattice}, where we have also illustrated how the "shadowing" imposed by the global Markov property \cite{goldstein90} implies that there is no dependence on states in lower rows of the lattice.

Another way to derive the entropy using the same tiling is to partition $A$ into two sets $A_1$ and $A_2$, and then use the expression of Theorem \ref{teo:mainteo}, which yields
\begin{equation}
s_\mu = \frac12 S(A_1 | A_1 + G^-) + \frac12 S(A_2 | (A_2 + G^-) \cup (A_1 + G)) \;. 
\label{eq:triang_2}
\end{equation}
This is schematically illustrated in the second row of Fig.~\ref{fig:triang_lattice}, again making use of the "shadowing" of the lower rows. For the second conditional entropy term of Eq.~(\ref{eq:triang_2}), there is only dependence on the three nearest neighbours, resulting in $S(A_2 | \{ (0,0),e_1,e_2) \})$, with $e_1$ and $e_2$ being the vectors of the hexagonal lattice.


\begin{figure}
\centering
\includegraphics[width=0.7\textwidth]{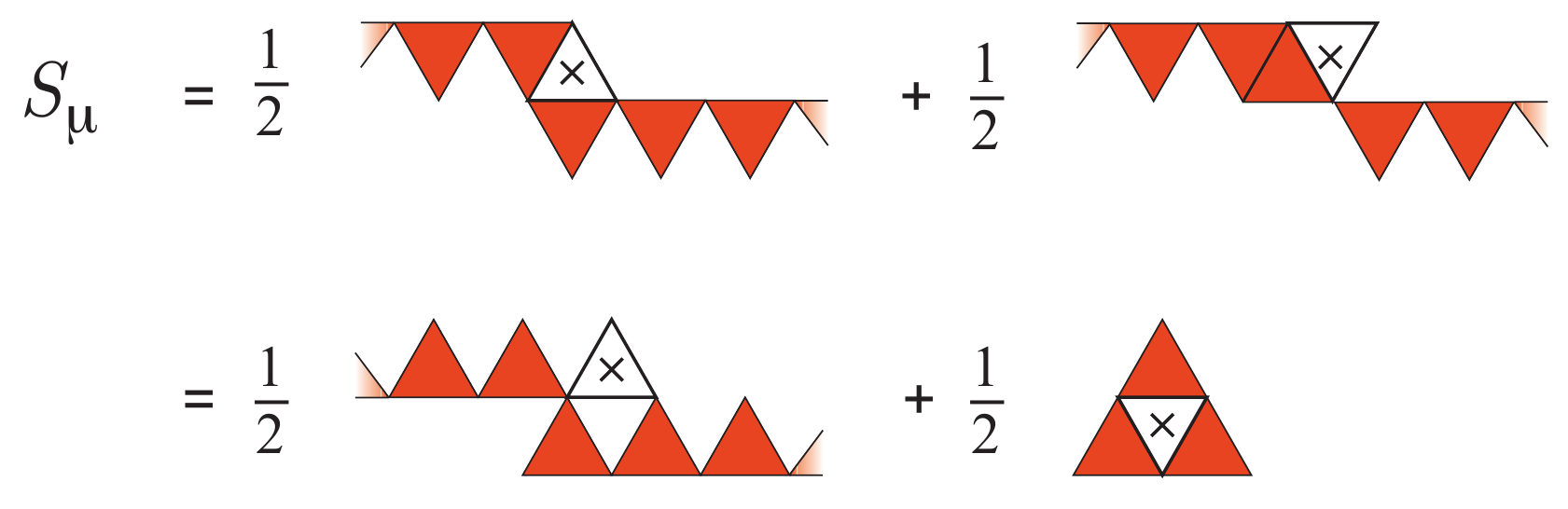}
\caption{An illustration of two ways to calculate the entropy of a nearest neighbor spin system on a triangular lattice. Each term
should be interpreted as the average local information of the cell containing the cross, conditioned on the states of the filled
cells.}
\label{fig:triang_lattice}
\end{figure}

\subsection{Application to Kikuchi's approximation of the 2D Ising model}
\noindent
In the early development of approximation schemes for the 2D Ising model, Kikuchi showed how a $2 \times 2$ block of states can be used in a good approximation of a measure on a 2D square lattice \cite{kikuchi51}. He showed that, for a rotation and translation invariant measure, the entropy density could well be approximated by 
\begin{equation} \label{eq:kikuchi-approx}
s_\mu \approx S(\{(0,0),(0,1),(1,0),(1,1)\})-2S(\{(0,0),(0,1)\})+S(\{(0,0)\}) \,.
\end{equation}
This expression was then used in minimisation of the free energy density, $f=u - k_\text{B} T s_\mu$, for the 2D Ising model, providing a good estimate of the free energy (with an error of less than $0.25\%$).

Using our scheme for calculating the entropy of a 2D square lattice, as was illustrated in Example 2.8, we can formulate a more efficient objective function for the free energy minimisation problem without increasing the number of variables. In a similar way as is illustrated in Fig.~\ref{fig:triang_lattice}, and following the notation of Example 2.8, we can write the entropy density in equilibrium as the sum of a checker board lattice entropy and a local conditional entropy, i.e.,
\begin{equation}
s_\mu= \frac12 \, S(\{0,0\} | \{0,0\}+G^- ) + \frac12 \, S( (1,0) | \{ (0,0),(1,1),(1,-1),(2,0)\} ) \,.
\end{equation}
Here we note that the first entropy can be approximated by the Kikuchi form of a $2 \times 2$ block of the checkerboard lattice, i.e., based on the conditional block 
in the second entropy term. The second entropy term is immediately determined by the Boltzmann probabilities, since the spin in question is shielded from the rest of the lattice and only depends on the four surrounding spin states. Therefore, we get the approximation
\begin{align} \label{eq:new-approx}
s_\mu \approx  &\frac12 \big( S(\{(0,0),(1,1),(1,-1),(2,0)\})-2S(\{(0,0),(1,1)\})+S(\{(0,0)\}) \big) + \nonumber \\
&+ \frac12 \, \sum_{x \in C} \mu(x) S_\text{B}(P_Z(x))  \,,
\end{align}
where summation is over all configurations $C$ on the diagonal $2 \times 2$ block, which leads to an average entropy of the conditional distribution of a spin $Z$ surrounded by such a block $x$. Here, $S_\text{B}(\{p,1-p\})=-p \log p-(1-p) \log(1-p)$. Since the second term is determined by the Boltzmann distribution, $P_Z$, it will only depend on the probability distribution over the diagonal $2 \times 2$ block, and thus the approximation (\ref{eq:new-approx}) will depend on the same number of variables as the original Kikuchi approximation (\ref{eq:kikuchi-approx}). Numerical calculations of the two approximations as functions of temperature is shown in Figure~\ref{fig:kikuchi}, and this illustrates that our scheme captures a little more of the correlations affecting the entropy and thus improves on the free energy estimate. When we are close to the critical temperature, the importance of longer correlations increases and the approximations are less good.

\begin{figure}
\centering
\includegraphics[width=0.85\textwidth]{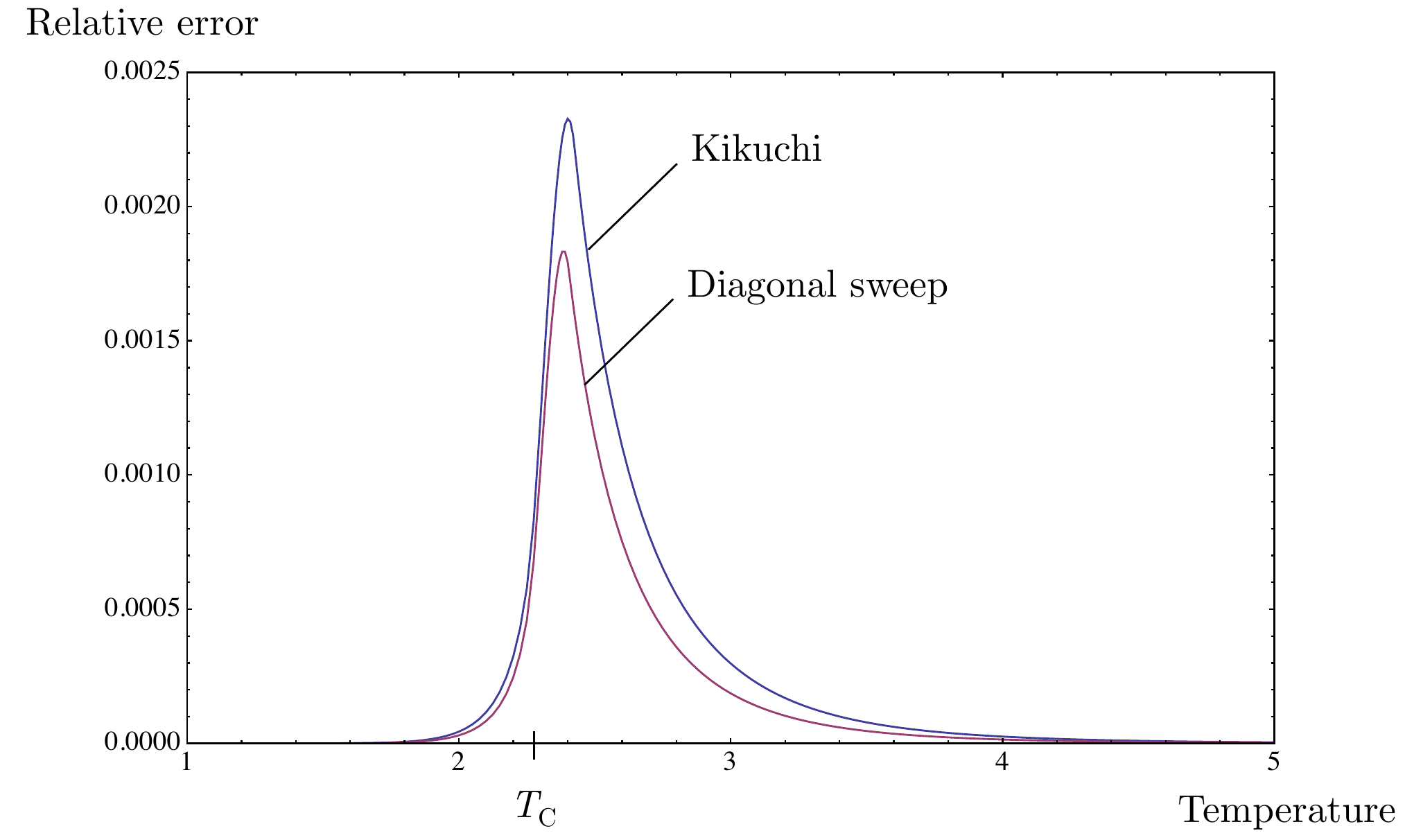}
\caption{The error of the free energy estimate for the Kikuchi approximation and for the modification based on our approach. }
\label{fig:kikuchi}
\end{figure}

\section{Applications to states generated in the time evolution of cellular automata}
\label{sec:CA}
The time evolution of cellular automaton states, each being a configuration of local states in an infinite lattice, provides good examples of systems where correlations are built up in a complex way, even if we start from an initial state generated by a simple stochastic process (usually without any correlations). Here we illustrate how the formalism can be applied to give better estimates on the entropy density of the states in cellular automata time evolution, at least for some points in time.

\subsection{A hierarchical scheme for decomposing the entropy}
By repeated use of Theorem~\ref{teo:mainteo}, we construct a hierarchical decomposition of the entropy density, which turns out to be highly efficient for states generated at certain time steps of certain cellular automaton rules.

The scheme we use is one where the first sweep involves states separated by a distance $2^m$, for some $m>1$. In the next step, the states in the middle between two consecutive states from the previous sweep are considered, and this is then repeated until all states of the lattice have been covered. For $m=3$, this can be graphically represented as
\begin{align} \label{eq:ca-example}
s_\mu \hspace{2mm}= \hspace{2mm}
    \frac18\,\cdot\,
   &\includegraphics[width=6cm]{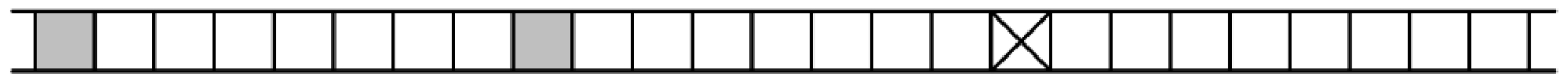} \;+ \nonumber \\
    + \; \frac18\,\cdot\,
   &\includegraphics[width=6cm]{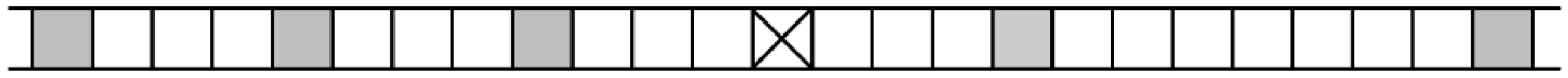} \;+ \nonumber \\
    + \; \frac14\,\cdot\,
   &\includegraphics[width=6cm]{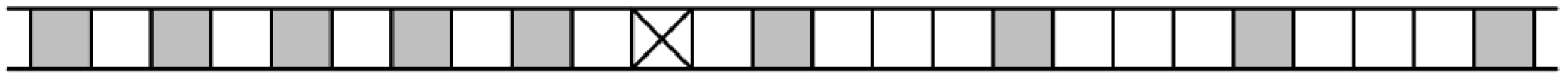} \;+ \nonumber \\
    + \; \frac12\,\cdot\,
   &\includegraphics[width=6cm]{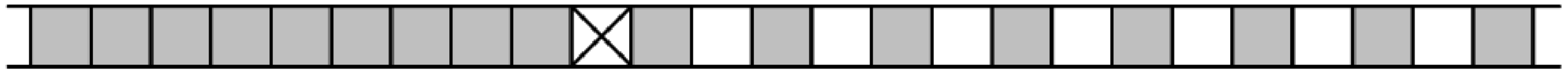}
\end{align}
This decomposition of the entropy is obtained by repeatedly using Theorem~\ref{teo:mainteo} on the first term in \eqref{eq:latticeteo}. With $m$ hierarchical levels, each one determined by a pair of points, $A_k=\{0,2^{k-1}\}$, with $1\le k \le m$, and $G_k=\langle2^k\rangle$, we get
\begin{align} \label{eq:hierarchical-ent}
s_\mu = \, &\frac{1}{2^m} S\big(\{0\} \big| \{0\}+G_m^- \big) +  \nonumber \\
&+ \sum_{k=1}^m \frac{1}{2^k} S\big( \{2^{k-1}\} \big| (\{2^{k-1}\}+G_k^-)\cup (\{0\}+G_k) \big) \,.
\end{align}
For $m=3$ this results in the entropy decomposition graphically illustrated in \eqref{eq:ca-example}. Here we note that 
\begin{align} \label{eq:cond-ent-ineq}
S\big(\{0\} \big| \{0\}+G_m^- \big) \ge S\big( \{2^{k-1}\} \big| (\{2^{k-1}\}+G_k^-)\cup (\{0\}+G_k) \big) \, ,
\end{align}
for any $k \le m$, since, as expressed by Lemma \ref{lem:G_nkonv}, the conditional entropy on the lefthand side depends on states that are a subset of the states that any of the entropies on the righthand side depends on. This may be a useful relation in numerical estimates of the entropy when long-range correlations dominate. 

For numerical estimates of the entropy, one typically calculates an estimate of the conditional entropy terms as a converging series with increasing size of the conditional configuration. In the examples below, we let $K$ denote the number of states in the conditional configuration. The $K$'th estimate to the entropy $s_\mu$ is therefore written
\begin{align}
s_\mu \le \frac{1}{2^m} S_{0,m}^{(K)} + \sum_{k=1}^m \frac{1}{2^k} S_{1,k}^{(K)} \,,
\end{align}
where we have introduced the notation $S_{0,m}^{(K)}$ for the $K$'th estimate of the conditional entropy in the first term in \eqref{eq:hierarchical-ent}, and $S_{1,m}^{(K)}$ for the corresponding $K$'th estimates of the conditional entropies in the sum. For the $K$ states in the conditional configuration we have chosen those that are closest to the position of the actual state (selecting a position to the left before one to the right). Since $S_{0,m}^{(K)} \ge S_{0,m}^{(\infty)} \ge S_{1,m}^{(\infty)} $, a stronger inequality, and in some situations a better estimate of $s_\mu$ is given by
\begin{align} \label{eq:entropy-estimate}
s_\mu \le \frac{1}{2^m} S_{0,m}^{(K)} + \sum_{k=1}^m \frac{1}{2^k} \min\big( S_{0,m}^{(K)}, S_{1,k}^{(K)} \big) \,.
\end{align}
This serves as a more efficient estimate whenever correlations that are multiples of $m$ dominate (i.e., if $S_{0,m}^{(K)} < S_{1,k}^{(K)}$), since all states in the conditional configuration of the $K$ states in $S_{0,m}$ are contributing to the correlations. The estimate \eqref{eq:entropy-estimate} is used in following the numerical example.

\subsection{Application to elementary cellular automaton rule 60}
The elementary cellular automaton rule 60 is an example of a dynamical system in which correlations are being spread out on increasing distances, provided that the starting state is not fully random. At each time step, we can characterise the state -- an infinite sequence of lattice sites, each with a state $0$ or $1$ -- by the entropy of the lattice. The rule replaces the local lattice state ($0$ or $1$) with its sum modulo $2$ with its left neighbour. This implies that the rule is surjective, i.e., it possesses a certain degree of reversibility, so that the entropy of the state is conserved in the time evolution. But, as correlation information (redundancy) is spread out over ever increasing block lengths, it becomes increasingly more difficult to fully detect the entropy in the state \cite{lindgren87, lindgrennordahl}.

If the initial configuration is given by a Bernoulli distribution, say with $P(1)=p$, then, even if the entropy stays constant, there is a highly complex dynamics in terms of how correlations are spread out in the system. It is known that the average correlation length increases linearly in time \cite{lindgrennordahl}, when measured by how the block entropies, in the traditional approach, converge to the entropy density of the configuration. This is a strong indication on that it becomes increasingly more difficult to get a numerical estimate of the entropy of the cellular automaton state configuration as time increases. Therefore, we explore how the hierarchical entropy estimate \eqref{eq:entropy-estimate} performs in comparison with the traditional approach.

\begin{figure}
\centering
\includegraphics[width=0.85\textwidth]{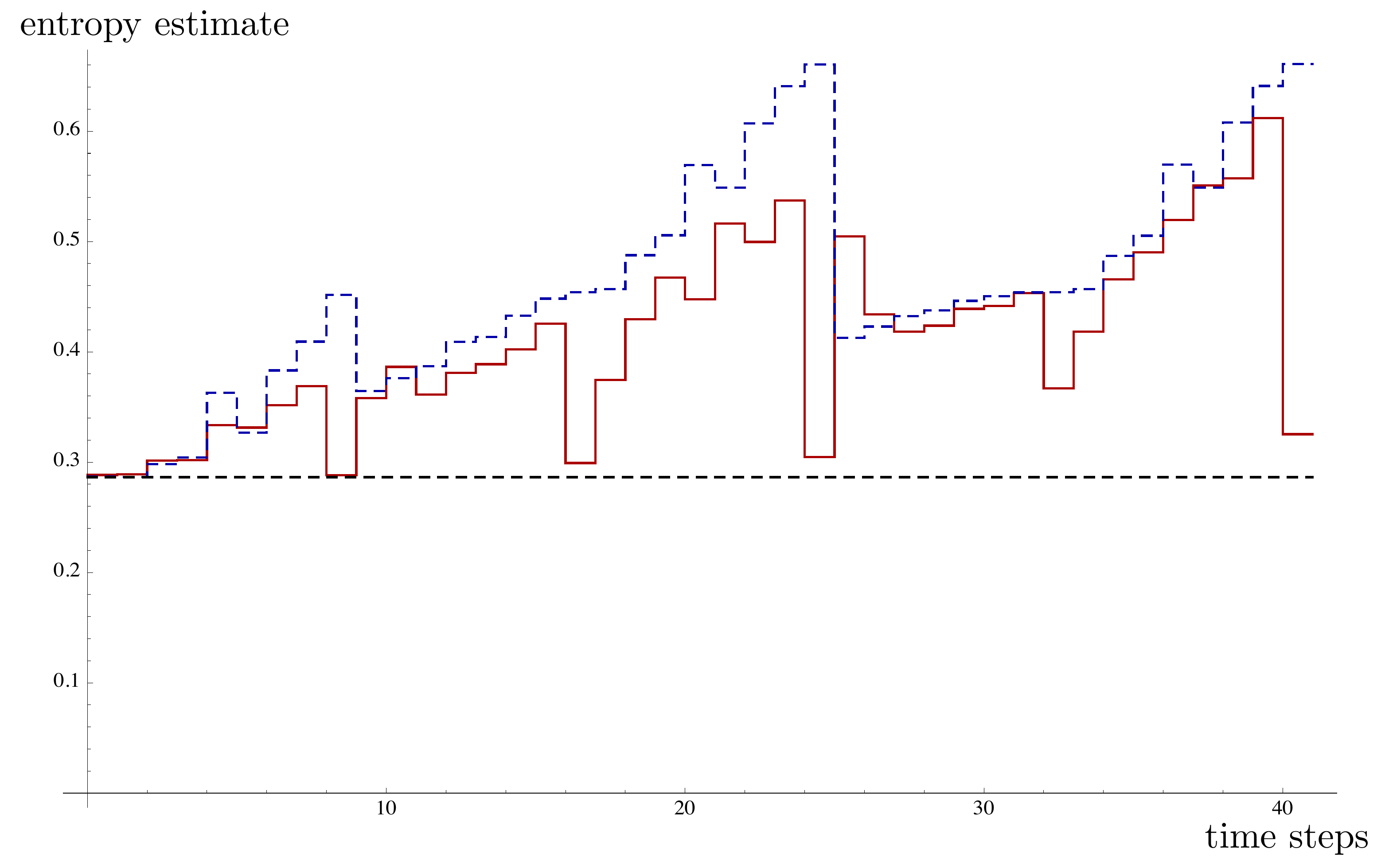}
\caption{The entropy estimates are shown for the 40 first time steps $(t=0,1,2,...,40)$ of cellular automaton rule 60, using both the traditional sequential estimate (blue dashed line) and the hierarchical estimate \eqref{eq:entropy-estimate} (as the solid red line), with the exact entropy value as the horizontal dashed line. The initial state is characterised by a Bernoulli distribution with $p=0.05$ implying an entropy density of $s \approx 0.286 \log(2)$ for all time steps. The numerical estimates are based on a cellular automaton with configuration size of $2^{16}$ cells.}
\label{fig:entropy-comp-fig}
\end{figure}

Fig.~\ref{fig:entropy-comp-fig} shows, as a function of time, the entropy estimates based on $K=7$ states in the conditional configurations, for the traditional approach and for the hierarchical scheme \eqref{eq:entropy-estimate}, respectively. The graph illustrates that for most time steps the estimates are not very close to the exact value, but also that for most time steps the hierarchical estimate works better. It is clear that, at certain time steps, the lattice configurations do have structure that is well captured by the hierarchical approach. This happens, for example, at $t=8, 16, 24, 32, \text{and } 40$, which are time steps where correlation of length $8$ is dominating, a consequence of the cellular automaton rule 60 characteristics. 

Time steps that are powers of $2$, i.e., $t = 2^k$, with $k=1,2,3,...$, have a very characteristic correlation structure. At any such time step, all correlations are multiples of $2^k$, i.e., they stretch over blocks of length $m \, 2^k + 1$, with $m \ge 1$. This means that the state is decomposed into $2^k$ different independent subsystems, with no correlations in between them. The traditional approach in estimating the entropy, by extending the conditional block one state by one, will quickly run into numerical difficulties when $k$ increases. This is clearly seen in Fig.~(\ref{fig:entropy-comp-fig}), even though we note that at $t=32$, also the hierarchical approach has an error in its estimate. Here we have used a first distance of $8$ in top level of the hierarchical scheme, see Eq.~(\ref{eq:ca-example}), but if we instead would have chosen $32$ as the starting distance, we would capture the underlying entropy to much a higher degree of accuracy.

\section{Conclusions}
The main contribution with this paper is Theorem \ref{teo:mainteo} showing that there is high flexibility in how one can search for and calculate correlations that determines the entropy density of a lattice system. The traditional approach, exemplified in one dimension, based on sequentially increasing blocks of local states by one state at the time, runs into an exponential explosion of terms to calculate if correlations are long. The flexibility that is established by Theorem \ref{teo:mainteo}, indicates that there may exist more efficient ways in which the lattice may be decomposed, which means that one does not necessarily have to sweep the lattice moving from one site to the neighbouring one. Instead, on may take larger steps, and then in following sweeps one can take the remaining sites into account, so that, at the end, the full lattice has been covered.

By repeated use of Theorem \ref{teo:mainteo}, we note that a hierarchical scheme for the decomposition of the lattice can be obtained. This was illustrated with the application to a surjective cellular automaton rule, in which the entropy is conserved but where information in correlations may still be spread out on ever increasing distances. The numerical estimates of the entropy density based on a hierarchical scheme, compared with the traditional approach, illustrates that for most time steps the hierarchical scheme yields a better estimate, and that, for some time steps, it is capable of capturing the entropy with high accuracy.

The two-dimensional illustration, using the Ising model and the Kikuchi approximation of the entropy estimate, shows that there is a natural way to decompose the two-dimensional lattice of a spin system when we have nearest neighbour interactions only. In general, we are interested in the equilibrium description of the system, and we thus are interested in minimising the free energy. In a first sweep, one only considers the states in a checkerboard pattern of the lattice, thus covering half of the system. In this sub-lattice there is no energy interaction, but only contributions to the entropy. In the second sweep, the remaining states are filled in, but here we note that in equilibrium their statistics is fully determined by the local Boltzmann distribution, as each of these states will only depend on the four neighbours from the checkerboard configuration. This second term thus determines the energy, and half of the entropy contribution. As was shown in Figure~\ref{fig:kikuchi}, this gives a better estimate of the free energy when using the same number of free variables in the minimisation procedure as is done in the Kikuchi approximation.

Whether a system is to be considered complex or not, may certainly depend on how one looks for correlations. The most commonly used complexity quantity for one-dimensional systems is the effective measure complexity \cite{grassberger86}, or the excess entropy (see \cite{crutchfield03} and references therein), which implicitly is based on a sequential, site-by-site, extension of blocks. This is also the reason why this complexity quantity increases linearly in the time evolution of cellular automaton rule 60, despite the fact that at certain points in time, the system is not that complex if one would look for correlations in a different way. Whether a more general complexity measure could be constructed based on the flexibility on how lattices can be decomposed, as stated by Theorem \ref{teo:mainteo}, is a question for future research.

\section{Acknowledgements}
Financial support from EU-FP7 project MatheMACS is gratefully acknowledged.


\bibliographystyle{plain}
\bibliography{ref-entropy}

\end{document}